\def \figwidth {0.8}
\newtheorem{theorem}{Theorem}
\title{Time-Reversal Symmetry Bounds in Temporal Coupled-Mode Theory}
\author{Ken X. Wang
\thanks{To whom correspondence should be addressed. Electronic mail: wxz@hust.edu.cn}
\thanks{School of Physics, Huazhong University of Science and Technology, Wuhan, Hubei, China}
}
\begin{document}
\maketitle
\begin{abstract}
We provide a general treatment for the temporal coupled-mode theory with arbitrary numbers of modes and ports, and derive tight bounds for the coupling phases in addition to coupling strengths under time-reversal symmetry. We report trade-offs between the cardinalities of the reciprocal regions of the resonant coupling strengths versus phases. We discover that time-reversal symmetry enforces projected generalized reflections in the background to cancel out completely. In double-port systems, the reciprocal regions of the coupling phases span a quarter of the nonreciprocal regions for any non-hidden mode.
\end{abstract}

The temporal coupled-mode theory is a phenomenological theory for Fano-Feshbach resonances and is widely used to describe and predict physical devices~\cite{Hau83,Sie86,HH91,BD02,Fan08,FSJ03,SWF04,WF05b,JBS+02,FYLF11,ZMZ12,RF12,ZSO+13,WYSF13,FSS+14,SA17a,LHZ+23,WYL+14a,FSS+14,XC15,FSS+14,ESSA14,YCH+15,JJWM08,Fan08,SKM09,YRF10b,LAGL10,LLG+11,Pile12,RF12,ZMZ12,VYR+12,HZL+13,FSS+14,MCH14,JQW+15,SF16,AYF17,KSM18,Wang15,Dana18,DPM18,GLY+23,CTK24}. The input and output wave amplitudes can be observed at ports~\cite{SWF04}, characterizing a scattering matrix of the physical system. Waves propagate through the system via two pathways: a background pathway without interacting with localized modes, and a resonant pathway with such perturbative interaction through the coupling between the ports and the leaky modes. Waves from these two pathways coherently form the overall scattering amplitudes. These two pathways, however, are not independent but constrained by symmetries, and the coupling parameters between modes and ports cannot be arbitrary but are related to the scattering background. In particular, it was previously shown that there exists a bound on the coupling strengths in the single-mode double-port case~\cite{WYSF13}. In this Letter, we provide a general treatment for any numbers of modes and ports, and show that the symmetry constraints not only bound coupling strengths but also the phases.

A temporal coupled-mode system is shown in Figure~\ref{fig1}. We assume that there are $m$ modes and $n$ ports. In electromagnetics, for instance, a mode is a localized solution of the Maxwell's equations, and a port corresponds to a propagating solution. It is perhaps worth noting that our abstraction of the modes and ports might not intuitively match physical or geometrical shapes of the actual structures.
\begin{figure}
\begin{center}
\includegraphics[width=\figwidth\textwidth]{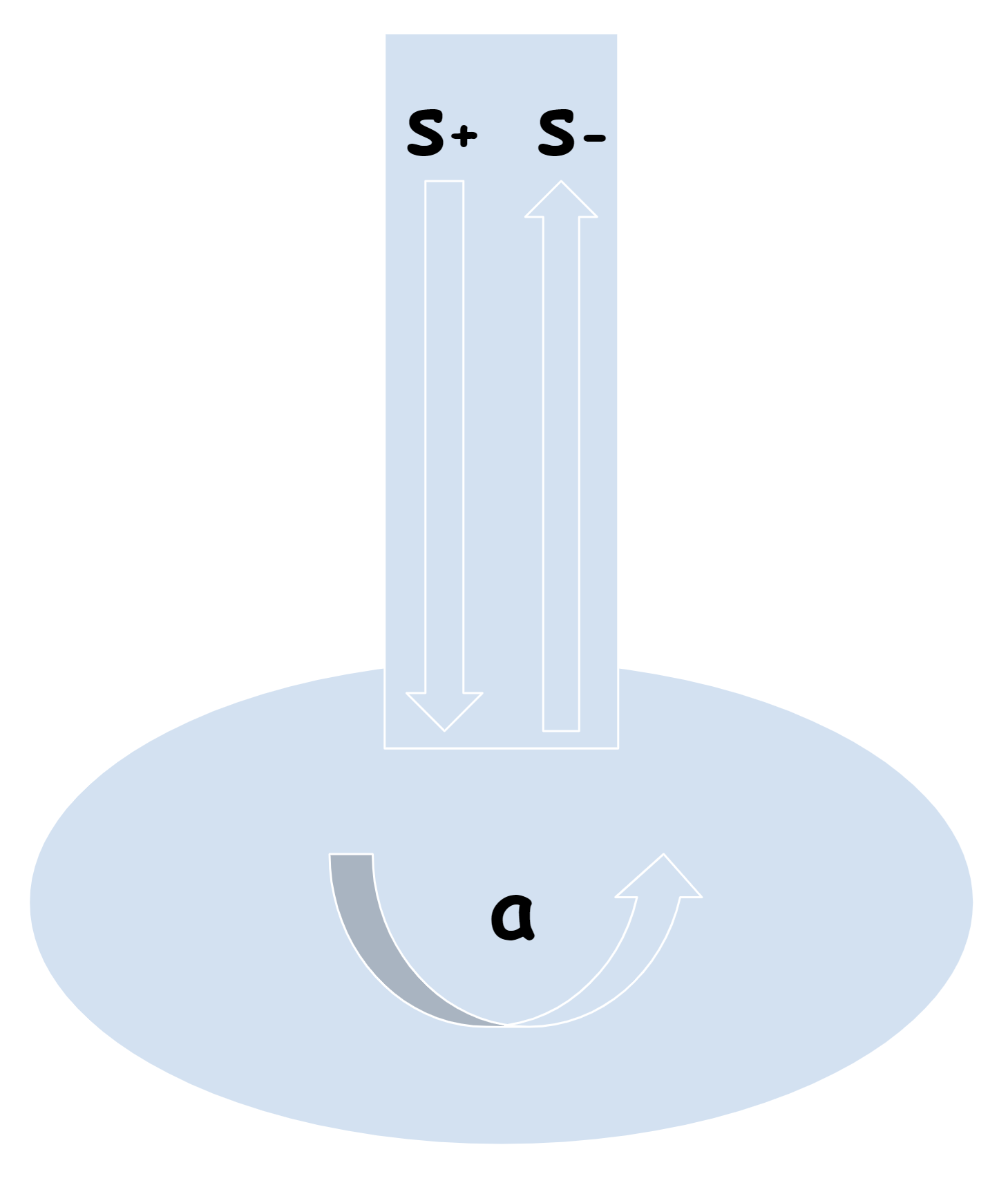}
\caption{Structure abstraction in temporal coupled-mode theory. $\mathbf{a}$ is the amplitude vector of the $m$ localized modes supported by a physical structure(s), which has many possibilities. $\mathbf{s_+}$ and $\mathbf{s_-}$ are the incoming and outgoing amplitude vectors of propagating waves through the $n$ ports, which could also correspond to a variety of different physical structures.}
\label{fig1}
\end{center}
\end{figure}
The coupled-mode equations model the system dynamics as follows:
\begin{equation}
%\frac{\mathrm{d}a}{\mathrm{d}t}
\dot{\mathbf{a}}=(j\Omega - \Gamma)\mathbf{a}+K^T\mathbf{s_+},
\label{eq:cmt1}
\end{equation}
\begin{equation}
\mathbf{s_-}=C\mathbf{s_+}+D\mathbf{a},
\label{eq:cmt2}
\end{equation}
where $\mathbf{a}$ is the mode amplitude vector, $\mathbf{s_+}$ and $\mathbf{s_-}$ are the input and output wave amplitude vectors at the ports, $\Omega$ and $\Gamma$ are the frequency and decay matrices of the modes, $K$ and $D$ are the in-coupling and out-coupling matrices between the modes and the ports, and $C$ is the background or the direct scattering matrix between the ports without involving the modes. Equation~\ref{eq:cmt1} describes the evolution of the modes with incoming waves from the ports, and Equation~\ref{eq:cmt2} coherently combines scattering along the background and the resonant-assisted pathways.

The system parameters above are not arbitrary but obey physical laws.
\begin{table}
\begin{center}
%\begin{tabular}{ c }
\begin{equation} \Omega^\dagger = \Omega \end{equation} \\
\begin{equation}\label{eq:cci}C^\dagger C = I\end{equation} \\
\begin{equation}\label{eq:ddg}D^\dagger D = 2 \Gamma\end{equation}  \\
\begin{equation}\label{eq:ckd} C K^* = -D\end{equation} \\
\begin{equation}\label{eq:kdn}\vert \vert K \vert \vert = \vert \vert D \vert \vert\end{equation} 
%\end{tabular}
\end{center}
\caption{Constraints on temporal coupled-mode system parameters under energy conservation.}
\label{energy}
\end{table}
If a temporal coupled-mode system obeys energy conservation alone without time-reversal symmetry, the system parameters are constrained by the equations in Table~\ref{energy}~\cite{SWF04,Wang18,ZGF19}. In particular, $\Gamma^\dagger = \Gamma$ as implied by Equation~\ref{eq:ddg}.
\begin{table}
\begin{center}
%\begin{tabular}{ c }
\begin{equation} \Omega^\dagger = \Omega \nonumber\end{equation} \\
\begin{equation}C^\dagger C = I\nonumber\end{equation} \\
\begin{equation}\label{eq:cct} C = C^T \end{equation} \\
\begin{equation}D^\dagger D = 2 \Gamma\nonumber\end{equation} \\
\begin{equation}C K^* = -D\nonumber\end{equation} \\
\begin{equation}\label{eq:kd}K =  D \end{equation} 
%\end{tabular}
\end{center}
\caption{Constraints on temporal coupled-mode system parameters under energy conservation and time-reversal symmetry.}
\label{time}
\end{table}
With both energy conservation and time-reversal symmetry, which together imply reciprocity~\cite{Hau83,FSJ03,Ach04,JJWM08,SM22,GZF22,GF22}, the system parameters are constrained by the equations in Table~\ref{time}. In particular, the in-coupling and out-coupling matrices are equal as shown in Equation~\ref{eq:kd}.

Combining Equations~\ref{eq:ckd} and~\ref{eq:kd}, we have:
\begin{equation}
    CD^* = -D.
    \label{cdd}
\end{equation}
We observe that the background pathway, described by the background scattering matrix $C$, and the resonant pathway under reciprocity, described by the coupling matrix $D$, are not independent of each other, but related by Equation~\ref{cdd}. $C \in \mathrm{U}(n) \subset \mathbb{C}^{n \times n}$ and $D \in \mathbb{C}^{n \times m}$, in particular, we can write down each element in $D$:
\begin{equation}
\label{eqd}
D =
\left(
\begin{array}{cccc}
d_{11} & d_{12} & \cdots & d_{1m} \\
d_{21} & d_{22} & \cdots & d_{2m}\\
\vdots & \vdots & \ddots & \vdots \\
d_{n1} & d_{n2} & \cdots & d_{nm}\\
\end{array}
\right),
\end{equation}
where the matrix element $d_{pq}$ is the coupling constant between the $p$-th port and the $q$-th mode, where $p=1,2,\ldots,n$ and $q=1,2,\ldots,m$. From Equations~\ref{cdd} and~\ref{eqd}, we can derive that:
\begin{equation}
\label{eq:qq}
C\left(
\begin{array}{cccc}
d_{1q} \\
d_{2q}\\
\vdots \\
d_{nq}\\
\end{array}
\right)^* =- \left(
\begin{array}{cccc}
d_{1q} \\
d_{2q}\\
\vdots \\
d_{nq}\\
\end{array}
\right), \;\forall q.
\end{equation}
Equation~\ref{eq:qq} holds for each mode, and implies that the constraint is imposed on each mode independently. On the system level, one could engineer a mode without necessarily affecting the coupling properties of other modes. We denote $\theta_{pq} = \angle d_{pq}$ as the phase angle of the matrix element $d_{pq}$ in $D$, then $d_{pq} = \vert d_{pq} \vert e^{j \theta_{pq}}$. We can then expand Equation~\ref{eq:qq} into the following:
\begin{equation}
%\begin{multiline}
C
\left(
\begin{array}{cccc}
e^{-j\theta_{1q}} & 0 & \cdots & 0 \\
0 & e^{-j\theta_{2q}} & \cdots & 0\\
\vdots & \vdots & \ddots & \vdots \\
0 & 0 & \cdots & e^{-j\theta_{nq}}\\
\end{array}
\right)
\left(
\begin{array}{cccc}
\vert d_{1q} \vert \\
\vert d_{2q} \vert \\
\vdots \\
\vert d_{nq} \vert \\
\end{array}
\right) =- \left(
\begin{array}{cccc}
e^{j\theta_{1q}} & 0 & \cdots & 0 \\
0 & e^{j\theta_{2q}} & \cdots & 0\\
\vdots & \vdots & \ddots & \vdots \\
0 & 0 & \cdots & e^{j\theta_{nq}}\\
\end{array}
\right)
\left(
\begin{array}{cccc}
\vert d_{1q} \vert \\
\vert d_{2q} \vert \\
\vdots \\
\vert d_{nq} \vert \\
\end{array}
\right), \;\forall q.
%\end{multiline}
\label{eq:long}
\end{equation}
Equation~\ref{eq:long} can be written as:
\begin{equation}
    G \left(
\begin{array}{cccc}
\vert d_{1q} \vert \\
\vert d_{2q} \vert \\
\vdots \\
\vert d_{nq} \vert \\
\end{array}
\right) = 0,\;\forall q,
\label{eq:gd}
\end{equation}
where
\begin{equation}
    G = C
\left(
\begin{array}{cccc}
e^{-j\theta_{1q}} & 0 & \cdots & 0 \\
0 & e^{-j\theta_{2q}} & \cdots & 0\\
\vdots & \vdots & \ddots & \vdots \\
0 & 0 & \cdots & e^{-j\theta_{nq}}\\
\end{array}
\right) + \left(
\begin{array}{cccc}
e^{j\theta_{1q}} & 0 & \cdots & 0 \\
0 & e^{j\theta_{2q}} & \cdots & 0\\
\vdots & \vdots & \ddots & \vdots \\
0 & 0 & \cdots & e^{j\theta_{nq}}\\
\end{array}
\right).
\label{eq:g}
\end{equation}

We examine all possible solutions of Equation~\ref{eq:gd}. One solution is that:
\begin{equation}
\left(
\begin{array}{cccc}
\vert d_{1q} \vert \\
\vert d_{2q} \vert \\
\vdots \\
\vert d_{nq} \vert \\
\end{array}
\right) = 0.
\label{eq:dzero}
\end{equation}
Equation~\ref{eq:dzero} describes a $q$-th mode uncoupled from all the ports. Substituting Equation~\ref{eq:dzero} into Equation~\ref{eq:ddg}, we observe that such a mode cannot leak to ports through other modes, either. Consequently, such modes could not couple to the ports either directly or indirectly, however, they could still contribute to the overall scattering process through coupling with other modes if the overlap integral with some other mode is nonzero~\cite{SWF04}. We refer to such mode as a hidden mode. In fact, if we do not impose any constraints on the coupling phases, then almost surely $\mathrm{rank}(G)=n$ and the modes are necessarily hidden. In other words, for non-hidden modes, the coupling phases are necessarily constrained.
%We can further show that, if a mode is not coupled to any of the ports indirectly, that is, the off-diagonal elements on the $q$-th row and the $q$-th column of $\Gamma$ are zero, then such a mode cannot couple to any of the ports directly

For non-hidden modes, $G$ is singular by Equation~\ref{eq:gd}. Furthermore, if we do not impose any constraints on the coupling strengths, then we almost surely have:
\begin{equation}
    G = 0.
    \label{eq:g0}
\end{equation}
Equation~\ref{eq:g0} is only possible if $C$ is diagonal, and one must carefully engineers the coupling phases. Specifically, the background scattering matrix has to be given by:
\begin{equation}
C=
\left(
\begin{array}{cccc}
e^{j\delta_{1}} & 0 & \cdots & 0 \\
0 & e^{j\delta_{2}} & \cdots & 0\\
\vdots & \vdots & \ddots & \vdots \\
0 & 0 & \cdots & e^{j\delta_{n}}\\
\end{array}
\right),
\label{eq:cdiag}
\end{equation}
where there are no transmissions, and the reflection at the $p$-th port introduces a phase shift of $\delta_p$. With Equations~\ref{eq:g},~\ref{eq:cdiag} and~\ref{eq:g0}, we derive:
\begin{equation}
\theta_{pq} = \frac{\delta_p \pm \pi}{2},
\label{eq:theta_delta}
\end{equation}
for the $q$-th mode and the $p$-th port. It is remarkable that all mode-port couplings are necessarily in perfect synchronization among different modes. Equations~\ref{eq:cdiag} and~\ref{eq:theta_delta} establish the condition for Equation~\ref{eq:g0} to hold.

If Equations~\ref{eq:cdiag} and~\ref{eq:theta_delta} are not satisfied simultaneously, for a non-hidden mode such that Equation~\ref{eq:dzero} does not hold, we have $0 < \mathrm{rank}(G) < n$. There still exist uncountably infinite many solutions of the coupling strengths $\vert d_{pq} \vert$, but with constraints, although many of which are redundant due to the rank deficiency and complexness of $G$. The kernel of $G$ has a dimensionality of $n-\mathrm{rank}(G)$ according to the rank-nullity theorem~\cite{Mey00}. Thus the lower rank $G$ possesses, there are less restrictions on the coupling strength and more restrictions on the coupling phase. One could leverage this trade-off to engineer desirable coupling. In general, \begin{equation}\det(G)=0,\label{eq:detg0}\end{equation} and we can compute the determinant $\det(G)$~\cite{Cos06}:
\begin{equation}
\det(G) = \sum_{\sigma \subseteq [n]} \left[\mathrm{minor}_{\sigma} \left(
C
\mathrm{diag}(
e^{-j\theta_{1q}}, e^{-j\theta_{2q}} , \cdots, e^{-j\theta_{nq}}
) \right)
\mathrm{minor}_{[n] \setminus\sigma}
\mathrm{diag}(
e^{j\theta_{1q}}, e^{j\theta_{2q}} , \cdots, e^{j\theta_{nq}})\right],
\label{eq:detg}
\end{equation}
where $[n]:= \{1, 2, \cdots, n\}$, and $M_\sigma$ or $\mathrm{minor}_\sigma(M)$ is the principal minor of a square matrix $M$ indexed by $\sigma$, with $C_\emptyset = 1$.
Equations~\ref{eq:detg0} and~\ref{eq:detg} together can be simplified as:
\begin{equation}
\sum_{\sigma \subseteq [n]} \left(
C_\sigma
 e^{-j \sum_{p \in \sigma}\theta_{pq} +
j \sum_{p \in [n] \setminus\sigma} \theta_{pq}}
\right) = 0.
\label{eq:detgc}
\end{equation}
Equation~\ref{eq:detgc} has intriguing physical implications. $C_\sigma$ can be interpreted as a generalized complex reflection coefficient upon the collection of ports $\sigma$. We can always divide the power set of $[n]$ into two complementary halves $N$ and $2^{[n]}\setminus N$ such that, for any element $\sigma \in N$, $[n]\setminus\sigma \in 2^{[n]}\setminus N$. Equation~\ref{eq:detgc} can be written as:
\begin{equation}
\sum_{\sigma \in N} \left(
C_\sigma
 e^{-j \sum_{p \in \sigma}\theta_{pq} +
j \sum_{p \in [n] \setminus\sigma} \theta_{pq}} +
C_{[n] \setminus\sigma}
 e^{-j \sum_{p \in [n] \setminus\sigma}\theta_{pq} +
j \sum_{p \in \sigma} \theta_{pq}}
\right) = 0.
\label{eq:detN}
\end{equation}
Using the mathematical results on complementary principal minors of unitary matrix in the Appendix, concerning both magnitude and phase, we can further derive that:
\begin{equation}
\sum_{\sigma \in N} \vert C_\sigma \vert \cos(\eta_\sigma) = 0,
\label{eq:det_eta}
\end{equation}
where
\begin{equation}
\label{eq:eta}
\eta_\sigma = \frac{\angle(C_{[n]})}{2} -\angle(C_\sigma)  + \sum_{p \in \sigma}\theta_{pq} -
 \sum_{p \in [n] \setminus\sigma} \theta_{pq},
\end{equation}
and $\angle \cdot$ represents the phase angle. Equation~\ref{eq:det_eta} determines the constraints on the coupling phases in relation to the background process $C$. $\vert C_\sigma \vert \cos(\eta_\sigma)$ is a projection of the generalized reflection, and $2^{n-1}$ such terms cancel out in order to satisfy the time-reversal symmetry.

For single-port systems that $n=1$, the most general background is given by $C = e^{j\phi}$, and a single-port version of Equation~\ref{eq:theta_delta} can be derived from Equation~\ref{eq:det_eta}. Therefore, in a single-port reciprocal system, there are no restrictions on the coupling strength of the non-hidden modes, but the couplings need to be in perfect synchronization.

For double-port systems that $n=2$, the most general $C$ can be written as:
\begin{equation}
C = e^{j \phi} \left(
\begin{array}{cc}
r & jt \\
jt & r\\
\end{array}
\right)
\label{eq:c2}
\end{equation}
where $r^2+t^2=1$.
Substituting Equation~\ref{eq:c2} into Equation~\ref{eq:det_eta},
\begin{equation}
    \cos(\theta_{1q} + \theta_{2q} - \phi) + r \cos(\theta_{1q} - \theta_{2q}) = 0.
\end{equation}
Since $0\leqslant r \leqslant 1$, we have:
\begin{equation}
    -1 \leqslant \frac{\cos(\theta_{1q} + \theta_{2q} - \phi)}{\cos(\theta_{1q} - \theta_{2q})} \leqslant 0.
    \label{eq:phasebound}
\end{equation}
Equation~\ref{eq:phasebound} provides the bound on feasible coupling phases for any $q$-th mode under time-reversal symmetry. If Equation~\ref{eq:phasebound} is violated, then no reciprocal system is possible. Without loss of generality, we can choose appropriate reference planes such that $\phi=0$, and plot the feasible phases in Figure~\ref{fig2}. Exactly one quarter of the codomain, or the nonreciprocal feasible region, forms the feasible region with reciprocity.
\begin{figure}
\begin{center}
\includegraphics[width=\figwidth\textwidth]{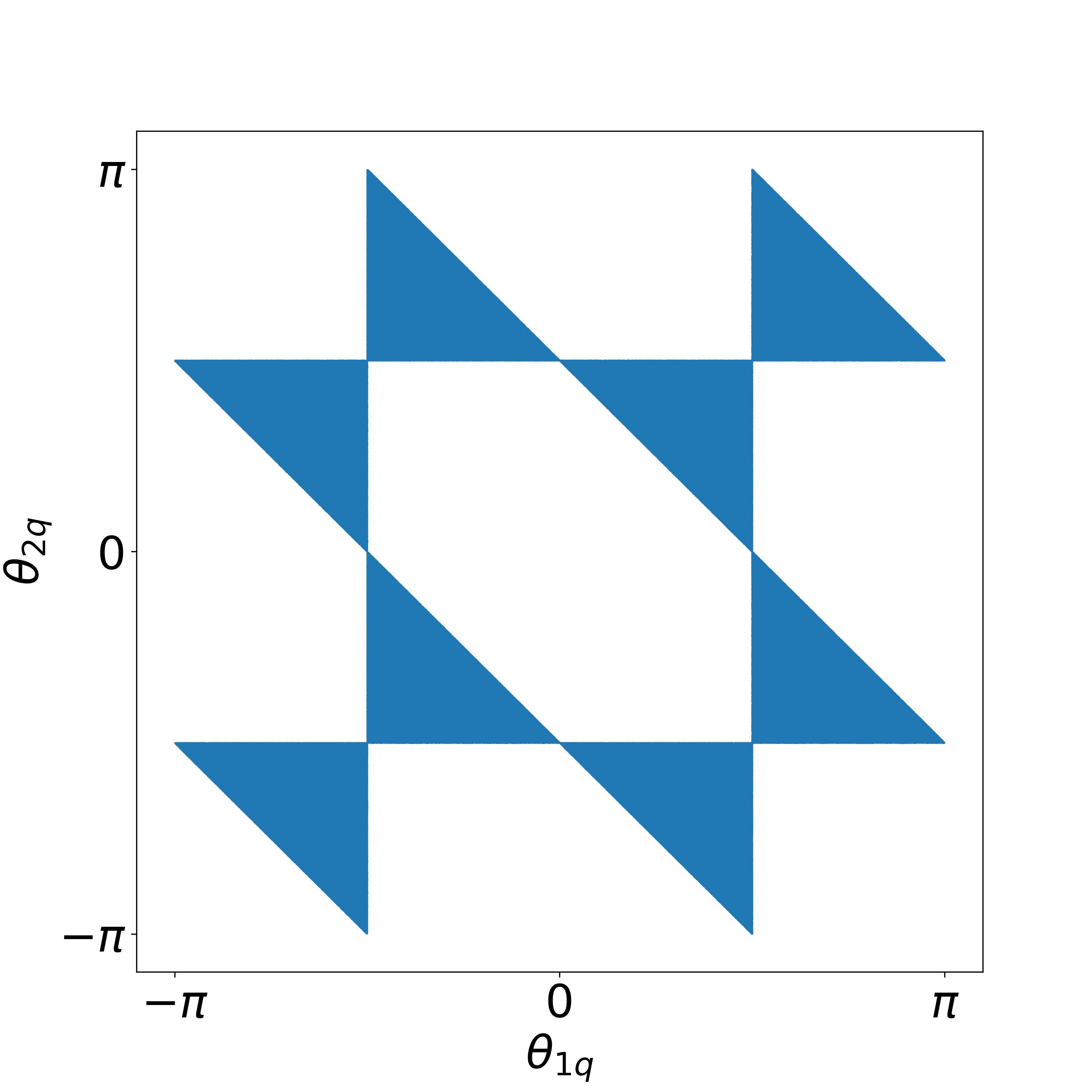}
\caption{Bounds on coupling phases under time-reversal symmetry. The colored regions are the reciprocal feasible regions of the phase angles at $n=2$, and the uncolored regions are necessarily nonreciprocal. The bounds are identical for any $q$-th non-hidden mode.}
\label{fig2}
\end{center}
\end{figure}
We can also use Equation~\ref{eq:gd} to derive that:
\begin{equation}
\label{eq:mag}
\frac{1-r}{1+r} \leqslant \left\vert \frac{d_{1q}}{d_{2q}} \right\vert^2 \leqslant \frac{1+r}{1-r}
\end{equation}
for each $q$. Equation~\ref{eq:mag} is a bound on the coupling strengths consistent with the results previously reported~\cite{WYSF13,SA17a}, and is more general since it holds for every single mode and beyond electromagnetics. Equation~\ref{eq:mag} further implies that a mode hidden from a port needs to be hidden from both ports.

When $n=3$, we can compute the bounds on coupling strengths and phases similarly, by utilizing the results on U(3) parametrization~\cite{Bro88,Jar05}. We could also obtain simpler closed-form bounds for systems with other symmetries, including parity, geometrical, and permutation symmetries, both for $n=3$~\cite{WF05a,WF05b,YF09b,FSS+14,SA17b}, and for systems with $n>3$~\cite{Cab63,Jar05}.

It is also worth noting that, one may construct an overall reciprocal scattering matrix with nonreciprocal coupling. We first define the overall scattering matrix $S \in \mathbb{C}^{N \times N}$:
\begin{equation}
\label{eq:spq}
\mathbf{s_-}=S\mathbf{s_+}.
\end{equation}
Applying Equations~\ref{eq:cmt1} and~\ref{eq:cmt2} into Equation~\ref{eq:spq}, the scattering matrix can be written as:
\begin{equation}
\label{eq:smx}
S = C + D[j(\omega-\Omega)+\Gamma]^{-1}K^T.
\end{equation}
From Equation~\ref{eq:smx}, one can see that, even when reciprocity is maximally violated at the in- and out-coupling, for example, when $K=-D$, $S$ could be symmetric and reciprocal. In other words, internal symmetry breaking does not necessarily lead to external symmetry breaking. It is also of interest to quantitatively study the degree of symmetry breaking and the relaxation of the constraints in Tables~\ref{energy} and~\ref{time}.

As a final remark, although the temporal coupled-mode theory is an approximate and phenomenological model, its accuracy is sufficient for many applications, for instance, in electromagnetics~\cite{JJWM08,WL24}. A simpler version of Equation~\ref{eq:mag} was discovered in the temporal coupled-mode theory~\cite{WYSF13}, and was later discovered that similar bounds are generally applicable in electromagnetics from first-principles~\cite{SA17a}.

In summary, we discover time-reversal symmetry bounds on both coupling strengths and phases, as well as the trade-off therein. Given the importance of phase in physics, we expect the results to be exploited in many scenarios. In addition to the bounds on the system parameters, the distribution of the parameters might not be uniform within the bounds. Further work could also include exploring deep and rich results in nonreciprocal systems or with energy conversation only in Table~\ref{energy}, in topological systems~\cite{OPA+19,WDWF21,GLXF23}, in generalized, extended, or other versions of temporal coupled-mode theory~\cite{BD15,QDX+22,OMA24}, and beyond the temporal coupled-mode theory~\cite{DZ19}.

\appendix
\section*{Appendix: Complementary Minors in Unitary Matrix}

Complementary minors of a unitary matrix have equal magnitudes and complementary phases that add up to the phase of the determinant of the unitary matrix.

We consider a unitary matrix $\mathrm{U} \in U(n)$:
\begin{equation}
%\begin{multiline}
U = \left(
\begin{array}{cc}
A & B \\
C & D
\end{array}
\right).
%\end{multiline}
\end{equation}
It suffices to prove for leading principal minors, as the results can be generalized to minors by permutation~\cite{Cot74,Mey00}. 

\begin{theorem}
\label{thm1}
Complementary minors of a unitary matrix have equal magnitude.
\end{theorem}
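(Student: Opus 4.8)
The plan is to reduce the magnitude identity to a single application of the classical Jacobi identity, which relates the minors of a matrix to the complementary minors of its inverse, and then to exploit the defining property that the inverse of a unitary matrix is its conjugate transpose. As already observed in the setup, it suffices to treat leading principal minors: writing $\sigma = \{1,\dots,k\}$ so that $A = U[\sigma,\sigma]$ and $D = U[\sigma^c,\sigma^c]$ with $\sigma^c = [n]\setminus\sigma$, the general case follows by conjugating $U$ with permutation matrices, which preserves unitarity and merely relabels rows and columns.

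The key step is Jacobi's theorem. First I would state that for any invertible $M \in \mathbb{C}^{n\times n}$ and any index set $\sigma$,
\[
\det\!\left(M^{-1}[\sigma,\sigma]\right) = (-1)^{2\sum_{p\in\sigma} p}\,\frac{\det\!\left(M[\sigma^c,\sigma^c]\right)}{\det M} = \frac{\det\!\left(M[\sigma^c,\sigma^c]\right)}{\det M},
\]
where the combinatorial sign is trivial precisely because the row and column index sets coincide for a principal minor. I would then specialize to $M = U$, for which $M^{-1} = U^\dagger$. Since restricting to a principal index block commutes with conjugate transposition, $U^\dagger[\sigma,\sigma] = \left(U[\sigma,\sigma]\right)^\dagger = A^\dagger$, so that $\det\!\left(U^\dagger[\sigma,\sigma]\right) = \overline{\det A}$. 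Substituting yields $\overline{\det A} = \det D / \det U$, i.e. $\det D = \overline{\det A}\,\det U$.

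To finish, I would take magnitudes of both sides. Because $U$ is unitary, $|\det U| = 1$, hence $|\det D| = |\overline{\det A}| = |\det A|$, which is exactly the claimed equality of complementary minors. As a bonus, taking arguments of the same relation rather than magnitudes gives $\angle \det A + \angle \det D = \angle \det U$, the companion phase statement that underlies Equation~\ref{eq:eta}.

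The only place requiring genuine care is the bookkeeping inside Jacobi's identity, namely getting the combinatorial sign right and verifying that passing to a principal block commutes with conjugate transposition. Since the final step discards phase and keeps only magnitude, however, the sign is immaterial, so this obstacle is essentially cosmetic for Theorem~\ref{thm1}. An alternative and more geometric route would invoke the cosine–sine decomposition of $U$: for complementary diagonal blocks the nontrivial singular values of $A$ and $D$ coincide, while the surplus singular values of the larger block all equal one, so the products of singular values, and therefore $|\det A|$ and $|\det D|$, agree; this avoids the sign bookkeeping entirely at the cost of invoking heavier machinery.
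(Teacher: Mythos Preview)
Your argument is correct. Jacobi's complementary-minor identity applied to $M=U$ with $M^{-1}=U^\dagger$ gives $\overline{\det A}=\det D/\det U$, and $|\det U|=1$ finishes the magnitude claim; the sign bookkeeping for principal minors is indeed trivial, and passing to a principal block commutes with conjugate transposition exactly as you say.

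This is a genuinely different route from the paper's. The paper never invokes Jacobi's identity; instead it uses the block-column orthonormality relations $A^\dagger A=I-B^\dagger B$ and $DD^\dagger=I-BB^\dagger$, and then argues that $\det(I-B^\dagger B)=\det(I-BB^\dagger)$ because $B^\dagger B$ and $BB^\dagger$ share the same nonzero spectrum (equivalently, the same elementary symmetric functions), yielding $|\det A|^2=|\det D|^2$. Your approach is shorter, and it delivers the phase statement $\angle\det A+\angle\det D=\angle\det U$ for free from the same equation, whereas the paper proves that separately via Schur complements. On the other hand, the paper's spectral argument extends immediately to non-square complementary blocks (equal nontrivial singular values), which your Jacobi route does not give without further work; your suggested CS-decomposition alternative is essentially the paper's idea dressed in different language.
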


\begin{proof}
By unitarity, $A^\dagger A = I - B^\dagger B$ and $DD^\dagger  = I - BB^\dagger$. Since the determinant of sum of identity plus a matrix can be expressed by sums of principal minors of that matrix~\cite{Cos06,RS87}, and all the symmetric functions of $B^\dagger B$ and $BB^\dagger$ are equal to sums of respective principal minors~\cite{Mey00}, we can conclude that $\det(A^\dagger A) = \det(DD^\dagger)$ and  $\vert\det(A)\vert=\vert\det(D)\vert \leqslant \vert \det(U) \vert$.
\end{proof}

The statement can be generalized to non-square complementary submatrices as they share an identical nontrivial spectrum. The result can be further generalized with interlacing eigenvalues.

Although we can define the phase of a matrix~\cite{WCKQ20}, we restrict our attention to the phase of minors.

\begin{theorem}
Phases of complementary minors of a unitary matrix add up to the phase of the determinant of the unitary matrix.
\end{theorem}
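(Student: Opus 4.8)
The plan is to refine the block argument behind Theorem~\ref{thm1} so that it carries phase information, not only magnitude. Keep the partition of $U$ above, with $A$ the leading $k \times k$ block and $D$ the complementary $(n-k)\times(n-k)$ block, so that the two complementary (leading principal) minors are $\det A$ and $\det D$. It suffices to treat the nondegenerate case $\det A \neq 0$: when $\det A = 0$, Theorem~\ref{thm1} forces $\det D = 0$ and neither phase is defined, so the statement is vacuous there. I will therefore assume $A$ and $D$ are invertible, which is automatic once $\det A \neq 0$ since $\lvert \det D\rvert = \lvert \det A\rvert$ by Theorem~\ref{thm1}.

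First I would read off the leading block of $U^{-1}$ in two ways. Using unitarity, $U^{-1} = U^\dagger$, so its leading $k\times k$ block is simply $A^\dagger$. Using the block-inversion (Schur complement) formula, the same block equals $(A - B D^{-1} C)^{-1}$. Equating gives $A - B D^{-1} C = (A^\dagger)^{-1}$, and taking determinants yields $\det(A - B D^{-1}C) = 1/\overline{\det A}$. Next I would invoke the Schur determinant identity $\det U = \det D \,\det(A - B D^{-1} C)$ to eliminate the Schur complement, which produces the key relation $\det D = \overline{\det A}\,\det U$.

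The phase statement then drops out: multiplying by $\det A$ gives $\det A\,\det D = \lvert \det A\rvert^2\,\det U$, and since the scalar $\lvert \det A\rvert^2$ is nonnegative, comparing arguments yields $\angle(\det A) + \angle(\det D) = \angle(\det U) \pmod{2\pi}$, as claimed; this simultaneously re-derives Theorem~\ref{thm1} because $\lvert\det U\rvert = 1$. To pass from leading principal minors to an arbitrary complementary principal pair $U_\sigma$, $U_{[n]\setminus\sigma}$, I would conjugate $U$ by the permutation $P$ carrying $\sigma$ to $\{1,\dots,\lvert\sigma\rvert\}$; the similarity $P U P^{-1}$ remains unitary, its determinant is unchanged because the permutation sign contributes $(\det P)^2 = 1$, and its leading and trailing principal minors are exactly $U_\sigma$ and $U_{[n]\setminus\sigma}$.

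I expect the main obstacle to be bookkeeping rather than conceptual. Two points need care: cleanly dispatching the degenerate vanishing-minor case (ideally by noting that the phases are undefined, or by a density argument over invertible $A$), and verifying that the permutation reduction truly preserves complementarity and the determinant so that no stray sign or phase is introduced. An equivalent and even shorter route is to quote Jacobi's theorem on the minors of the inverse, whose sign factor $(-1)^{2\sum_{i\le k} i} = 1$ for a leading principal index set immediately gives $\overline{\det A} = \det D/\det U$; I would present the self-contained Schur-complement derivation in the body and mention Jacobi only as a remark.
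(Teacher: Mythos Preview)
Your argument is correct, and it lives in the same Schur-complement framework as the paper's proof, but the decisive step is different.  The paper writes $\det U = \det A\,\det(I - A^{-1}BD^{-1}C)\,\det D$ and then argues, reusing the minor--sum machinery from Theorem~\ref{thm1}, that the middle factor $\det(I - A^{-1}BD^{-1}C)$ is \emph{real}, which forces the phases of $\det A$ and $\det D$ to add to $\angle\det U$.  You instead identify the Schur complement itself: comparing the top-left block of $U^{-1}=U^\dagger$ with the block-inverse formula gives $A - BD^{-1}C = (A^\dagger)^{-1}$ outright, hence $\det(A - BD^{-1}C)=1/\overline{\det A}$ and $\det D = \overline{\det A}\,\det U$.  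This is more explicit and self-contained---it bypasses the minor expansion entirely and in fact pins down the paper's ``real'' factor as the positive number $\det\bigl((A^\dagger A)^{-1}\bigr)=1/\lvert\det A\rvert^2$, so no sign ambiguity can arise.  Your handling of the degenerate case (phases undefined) and of the reduction from arbitrary complementary principal minors to leading ones via simultaneous row/column permutation is also more detailed than the paper's one-line appeal to permutation; the observation that $(\det P)^2=1$ kills any stray sign is exactly the point that needs checking.
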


\begin{proof}
For well-defined phases, $A$ and $D$ need to be invertible. Using Schur's complement~\cite{Mey00}, $\det(U) = \det(A-BD^{-1}C) \det(D) = \det(A) \det(I-A^{-1}BD^{-1}C) \det(D)$. By unitarity, and using the determinant of sum of matrices again, we can deduce that $\det(I-A^{-1}BD^{-1}C)$ is real. Hence $\angle\det(A) + \angle\det(D) =\angle\det(U)$ when the phase angle is properly defined.
\end{proof}

%\begin{acknowledgments}
%The author thanks his colleagues and students for helpful discussions.
%\end{acknowledgments}
%\appendix
%\nocite{*}
%\bibliography{bib}
\printbibliography
%\newpage
%\bibliographyfullrefs{bib}
\end{document}